\let\proof\@undefined
\let\endproof\@undefined
\DeclareMathAlphabet{\mathcal}{OMS}{cmsy}{m}{n}
\newcommand*{\QEDB}{\hfill\ensuremath{\square}}
\title{\LARGE \bf
Control of Multi-Agent Systems with Finite Time Control Barrier Certificates and Temporal Logic
}
\author{Mohit Srinivasan$^{1}$, Samuel Coogan$^{2}$, and Magnus Egerstedt$^{3}$
\thanks{*This work was supported by DARPA under the Grant N66001-17-2-4059.}
\thanks{$^{1}$Mohit Srinivasan and $^{3}$Magnus Egerstedt are with the School of Electrical and Computer Engineering, Georgia Institute of Technology,
	Atlanta, Georgia 30332, USA
        {\tt\small mohit.srinivasan@gatech.edu; magnus@gatech.edu}}%
\thanks{$^{2}$Samuel Coogan is with the School of Electrical and Computer Engineering and the School of Civil and Environmental Engineering, Georgia
	Institute of Technology, Atlanta, Georgia 30332, USA
        {\tt\small sam.coogan@gatech.edu}}%
}
\newtheorem{prop}{Proposition}
\newtheorem{defn}{Definition}
\newcommand{\Gglobe}{\mathcal{G}^\text{globe}}
\newcommand{\hglobe}{h^\text{globe}}
\newcommand{\piglobe}{\pi^\text{globe}}
\begin{document}

\captionsetup[figure]{labelfont=bf}
\captionsetup[subfigure]{labelfont=bf}

\maketitle
\thispagestyle{empty}
\pagestyle{empty}

\begin{abstract}
In this paper, a method to synthesize controllers using finite time convergence control barrier functions guided by linear temporal logic specifications for continuous time multi-agent dynamical systems is proposed. Finite time convergence to a desired set in the state space is guaranteed under the existence of a suitable finite time convergence control barrier function. In addition, these barrier functions also guarantee forward invariance once the system converges to the desired set. This allows us to formulate a theoretical framework which synthesizes controllers for the multi-agent system. These properties also enable us to solve the reachability problem in continuous time by formulating a theorem on the composition of multiple finite time convergence control barrier functions. This approach is more flexible than existing methods and also allows for a greater set of feasible control laws. Linear temporal logic is used to specify complex task specifications that need to be satisfied by the multi-agent system. With this solution methodology, a control law is synthesized that satisfies the given temporal logic task specification. Robotic experiments are provided which were performed on the Robotarium multi-robot testbed at Georgia Tech.

\end{abstract}

\section{INTRODUCTION}
Complex mission specifications require provably correct controllers that satisfy the task specification infinitely often. To that end, we address the issue of synthesizing a control architecture for multi-agent systems, subject to LTL specifications. We propose a control architecture which uses finite time convergence control barrier functions (hereafter to be known as \emph{finite time barrier certificates}) and temporal logic to solve the continuous time reachability problem in multi-agent dynamical systems. In particular, we use finite time barrier certificates, introduced in \cite{c15} in the context of composition of different behaviors for multi-robot systems, to guarantee finite time reachability to desired regions in the state space and linear temporal logic for specifying complex task specifications to be satisfied by the system.

The contributions of this paper are threefold. First, we introduce the notion of composition of multiple finite time barrier certificates using addition which provides feasible solutions in cases where using methods such as the one followed in \cite{c15} can lead to infeasibility. Our framework results in a larger set of feasible control laws as compared to methods such as the one followed in \cite{c15}. This allows for more flexibility when the task specification is more complex. Second, we use finite time control barrier certificates inspired by \cite{c15} for a continuous time multi-agent system in the context of motion planning. Third, we use a discretization free approach inspired by \cite{c1} which allows us to leverage key ideas from automata theory and temporal logic, in conjunction with finite time barrier certificates.

Control of multi-agent systems has been studied extensively over the past few years in a plethora of context and settings. We cite some of the recent work in this domain which are most related to the concepts provided in this paper. Papers \cite{c4}, \cite{c5}, discuss the use of control barrier functions (described in Section II) for collision avoidance in multi-robot systems, while \cite{c6}, \cite{c7} apply these principles to adaptive cruise control and automotive systems. In \cite{c2}, \cite{c9}, \cite{c10}, the authors discuss a verification method for non-linear systems as well as for continuous and hybrid systems in stochastic and worst case settings. In all these papers, the primary focus is safety whereas in our paper, we shift the focus towards finite time reachability for continuous time dynamical systems. LTL based motion planning has also been the subject of recent study \cite{c17, c18, c19}. Papers \cite{c8}, \cite{c11}, \cite{c12}, \cite{c13}, \cite{c14}, synthesize robust optimal controllers for multi-robot systems subject to LTL as well as other temporal logic languages. In the aforementioned papers, a discretization of the system dynamics is involved which we avoid in our paper.

This paper is organized as follows. Section II presents mathematical tools required for our solution approach. In Section III, we formulate the problem statement tackled in this paper. Section IV combines ideas from temporal logic and barrier certificates. In Section V, we formulate our theorem on composition of multiple finite time barrier certificates, and also discuss some important remarks regarding the same. Section VI provides an illustrative example that highlights our solution methodology. Section VII discusses the simulation and experimental results conducted on the Robotarium testbed facility at Georgia Tech \cite{c16}, respectively. Section VIII provides concluding remarks.

\section{MATHEMATICAL PRELIMINARIES}
This section discusses finite time barrier certificates and linear temporal logic.
\subsection{Control Barrier Functions}

In this paper, we use finite time convergence control barrier functions introduced in \cite{c15} as a building block for our control architecture. Consider a control affine dynamical system
\begin{equation}
\label{ControlAffineSystem}
\dot x = f(x) + g(x)u \text{,}
\end{equation}
where $f(x)$ and $g(x)$ are locally Lipschitz continuous, $x \in \mathcal{X}\subseteq \mathbb{R}^{n}$, and $u \in \mathbb{R}^{m}$. 

\begin{defn}{\cite{c15}}
A function $h(x):\mathcal{X}\to\mathbb{R}$ is a \textit{finite time convergence control barrier function} if there exists real parameters $\rho \in [0, 1)$ and $\gamma > 0$ such that for all $x \in \mathcal{X}$,
\begin{equation}
\sup_{u\in \mathbb{R}^{m}} \left\{ L_{f}h(x) + L_{g}h(x)u + \gamma \cdot \text{sign}(h(x)) \cdot |h(x)|^{\rho} \right\} \geq 0 \text{.}
\end{equation}
\end{defn} \QEDB

Finite time convergence control barrier functions are used to establish finite time reachability. Thus, if $h(x)$ is a finite time convergence control barrier function, then there exists a control input $u$ that drives the state of the system $x$ to the set $\{x \in \mathbb{R}^{n}| h(x)\geq 0\}$ in finite time, as formalized next.

\begin{prop}{\cite{c15}}
Let $h(x)$ be a finite time convergence control barrier function for \eqref{ControlAffineSystem}, and for all $x\in\mathcal{X}$, define 
\begin{multline}
\label{eq:Ux}
\underline{\mathcal{U}}(x) = \bigg\{ u \in \mathbb{R}^{m} \bigg| L_{f}h(x) + L_{g}h(x)u + \\
\gamma \cdot sign(h(x)) \cdot |h(x)|^{\rho} \geq 0 \bigg\} \text{.}
\end{multline}
Then, for any initial condition $x_{0} \in \mathcal{X}$ and any continuous feedback control $u : \mathcal{X} \rightarrow \mathbb{R}^{m}$ satisfying $u(x) \in \underline{\mathcal{U}}(x)$ for all $x \in \mathcal{X}$, the system will be driven to the set $\mathcal{G}:=\{x \in \mathbb{R}^{n}| h(x)\geq 0\}$ in a finite time $0 < T < \infty$ such that $x(T)\in \mathcal{G}$, where the time bound is given by $T = \frac{|h(x_0)|^{1 - \rho}}{\gamma (1 - \rho)}$ \cite{c20}, and renders the set $\mathcal{G}$ forward invariant.
\end{prop}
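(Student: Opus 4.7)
The plan is to reduce this to a standard finite-time Lyapunov argument applied to $h$ (or $-h$) along trajectories, plus a boundary condition for forward invariance. First, I would substitute any continuous feedback $u(x) \in \underline{\mathcal{U}}(x)$ into \eqref{ControlAffineSystem} and observe that along any resulting trajectory $x(t)$,
\begin{equation*}
\dot{h}(x(t)) = L_f h(x(t)) + L_g h(x(t)) u(x(t)) \geq -\gamma \cdot \text{sign}(h(x(t))) \cdot |h(x(t))|^{\rho},
\end{equation*}
so the only dynamical information I need is this scalar differential inequality.

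Next, for the reachability claim, I would handle the case $h(x_0) < 0$ by setting $V(t) = -h(x(t)) = |h(x(t))|$ while $x(t) \notin \mathcal{G}$. Because $\text{sign}(h) = -1$ on this region, the inequality becomes $\dot{V} \leq -\gamma V^{\rho}$. Multiplying by $(1-\rho) V^{-\rho}$ gives $\tfrac{d}{dt}\,V^{1-\rho} \leq -(1-\rho)\gamma$, and integrating from $0$ yields $V(t)^{1-\rho} \leq V(0)^{1-\rho} - (1-\rho)\gamma t$. The right-hand side hits zero at $T = \tfrac{|h(x_0)|^{1-\rho}}{\gamma(1-\rho)}$, so $V$ must reach $0$ by that time, i.e.\ $x(T) \in \mathcal{G}$. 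The case $h(x_0) \geq 0$ gives $T = 0$ trivially.

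For forward invariance of $\mathcal{G}$, I would argue by contradiction: suppose $h(x(t_0)) \geq 0$ but $h(x(t_1)) < 0$ for some $t_1 > t_0$, and let $t^{*} = \sup\{t \in [t_0,t_1] : h(x(t)) \geq 0\}$. By continuity $h(x(t^{*})) = 0$, and on the boundary the defining inequality reduces to $\dot{h}(x(t^{*})) \geq 0$ since $\text{sign}(0)\cdot |0|^{\rho} = 0$. This Nagumo-type condition, combined with the bound $\dot{h} \geq -\gamma h^{\rho}$ on a neighborhood where $h > 0$, prevents the trajectory from exiting, contradicting the existence of $t_1$.

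The main obstacle I anticipate is the non-Lipschitz nature of $|h|^{\rho}$ at $h = 0$ when $\rho \in (0,1)$, which means the comparison ODE $\dot{y} = -\gamma |y|^{\rho}\text{sign}(y)$ does not admit unique solutions through the origin. This must be handled by invoking the comparison lemma for differential inequalities in its one-sided form (the upper bound on $V$ only requires the maximal solution of the majorant), and by justifying the contradiction argument for forward invariance via a careful analysis of the infimum of exit times rather than pointwise invocation of Nagumo's theorem. Everything else reduces to the elementary integration shown above.
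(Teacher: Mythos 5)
The paper does not actually prove this proposition: it is quoted from \cite{c15} with the time bound attributed to \cite{c20}, and is stated with a box rather than a proof environment. So the only meaningful comparison is against the standard argument in those references, and your reconstruction is essentially that argument and is correct. The reduction to the scalar differential inequality $\dot h \geq -\gamma\,\mathrm{sign}(h)|h|^{\rho}$, the substitution $V=|h|$ with $\tfrac{d}{dt}V^{1-\rho}\leq -(1-\rho)\gamma$ on the interval where $V>0$, and the resulting bound $T=\tfrac{|h(x_0)|^{1-\rho}}{\gamma(1-\rho)}$ is exactly the Bhat--Bernstein finite-time Lyapunov computation. One small observation: your forward-invariance step is actually simpler than you make it sound. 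In your contradiction setup, on the interval $(t^{*},t_1]$ you have $h<0$ and hence $\dot h \geq \gamma|h|^{\rho}>0$ directly from the defining inequality, so $h$ is strictly increasing there and $h(x(t_1))>h(x(t^{*}))=0$, contradicting $h(x(t_1))<0$; no appeal to Nagumo's theorem or to uniqueness of the comparison ODE through the origin is needed for this direction. The non-Lipschitzness of $|h|^{\rho}$ at zero only matters for the reachability half, where the one-sided comparison you invoke (or simply working on $[0,T^{*})$ with $T^{*}=\inf\{t: V(t)=0\}$ and using continuity at $T^{*}$) closes the argument. The only hypotheses you leave implicit, as does the paper, are existence of the closed-loop solution on $[0,T]$ (continuity of $u$ gives existence but not uniqueness, and the bound holds for every solution) and continuous differentiability of $h$.
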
 \QEDB

Above, $L_{f}h(x) = \frac{\partial h(x)}{\partial x}f(x)$ and $L_{g}h(x) = \frac{\partial h(x)}{\partial x}g(x)$ are the Lie derivatives of $h(x)$ along $f(x)$ and $g(x)$ respectively.

Given a finite time convergence control barrier function $h(x)$, a closed formed expression for $\underline{\mathcal{U}}(x)$ as in \eqref{eq:Ux} is rarely available. However, in practice, this formulation is amenable to efficient online computation of feasible control inputs. In particular, for fixed $x$, the requirement that $u \in \underline{\mathcal{U}}(x)$ becomes a linear constraint and we define a minimum energy quadratic program (QP) as 
\begin{equation}
\begin{aligned}
& \underset{u \in \mathbb{R}^{m}}{\text{min}}
\quad ||u||_{2}^{2}\\
& \text{s.t \quad} u \in \underline{\mathcal{U}}(x) \text{.}
\end{aligned}
\end{equation}

This QP is solved with the finite time barrier certificate as the constraint on the control law $u(x)$ and returns the minimum energy control law that drives the system to the goal set $\mathcal{G}=\{x \in \mathbb{R}^{n}| h(x)\geq 0\}$ in finite time. We will reference this idea of a QP based controller throughout this paper in the context of our theorem and analysis.

\subsection{Linear Temporal Logic (LTL)}
Complex and rich system properties can be expressed succinctly using LTL, and one of the main advantages of LTL is the ease with which high level system objectives can be formalized. LTL formulas are developed using atomic propositions which label regions of interest within the state space. These formulas are built using a specific grammar. LTL formulas without the next operator are given by the following grammar \cite{c3}:
\begin{equation}
\phi = \pi | \neg \phi | \phi \vee \phi | \phi \mathcal{U} \phi
\end{equation}
where $\pi$ is a member of the set of atomic propositions, and $\phi$ represents a LTL specification. From the negation ($\neg$) and the disjunction ($\vee$) operators, we can define the conjunction ($\wedge$), implication ($\rightarrow$), and equivalence ($\leftrightarrow$) operators. We can thus derive for example, the eventually ($\Diamond$) and always ($\Box$) operators as $\Diamond \phi = \top \mathcal{U} \phi$ and $\Box \phi = \neg \Diamond \neg \phi$ respectively.

\section{PROBLEM FORMULATION}
Consider a multi-agent system consisting of $N$ robots with index set $\mathcal{I} = \left\{ 1, 2, \dots, N\right\}$. The  dynamics  for each agent $i \in \mathcal{I}$ is 
\begin{equation}
\dot x_{i} = u_{i}
\end{equation}
where $x_{i} \in \mathbb{R}^{n}$ and $u_{i} \in \mathbb{R}^{n}$ for some $n\in\mathbb{N}$.

Let $\mathcal{D}  \subset \mathbb{R}^{n}$ be the closed and connected domain for the agents and suppose $\mathcal{D}$ can be written as a super zero level set of a function $h_{\mathcal{D}}$, that is, $\mathcal{D} = \left\{ x \in \mathbb{R}^{n} | h_{\mathcal{D}}(x) \geq 0 \right\}$. The state space for the multi-agent system is then $\mathcal{D}^{N} \subset \mathbb{R}^{Nn}$. Consider a finite set $\mathcal{R}$ of regions of interest in the domain such that for all $r \in \mathcal{R}$, $r \subset \mathcal{D}$ and there exists a continuously differentiable function $h_{r} : \mathcal{D} \rightarrow \mathbb{R}$ such that $r = \left\{ x \in \mathcal{D} | h_{r}(x_i) \geq 0 \right\}$ for all $i \in \mathcal{I}$. Regions of interest may denote, for example, \emph{goal regions} that must be reached by an agent or agents, or it may denote regions that should be \emph{avoided} by an agent or agents. For each $r\in\mathcal{R}$ and $i\in\mathcal{I}$, let
\begin{equation}
\label{Goal}
\mathcal{G}_{i}^{r} = \left\{ x \in \mathcal{D}^{N} | h_{r}(x_{i}) \geq 0 \right\} \text{,}
\end{equation}
that is, $\mathcal{G}_{i}^{r}$ is the set of states for the multi-agent system for which agent $i$ is in region $r$.
Similarly, let
\begin{equation}
\label{Obstacle}
\overline{\mathcal{G}_{i}^{r}} = \left\{ x \in \mathcal{D}^{N} | h_{r}(x_{i}) < 0 \right\} \text{,}
\end{equation}
that is, $\overline{\mathcal{G}_{i}^{r}}$ is the set of states for the multi-agent system for which agent $i$ is outside the region $r$.

In addition to these regions of interest, we assume there exists a set of global conditions $\Gglobe_1, \Gglobe_2, \ldots,\Gglobe_C$ defined over the multiagent domain $\mathcal{D}^N$ such that for each $c\in \mathcal{C}:=\{1,2,\ldots,C\}$, $\Gglobe_c\subseteq \mathcal{D}^N$ and there exists a continuously differentiable function $\hglobe_c : \mathcal{D}^N\to\mathbb{R}$ such that $\Gglobe_c=\{x\in\mathcal{D}^N | \hglobe_c(x)\geq 0\}$. 

Similarly, we can write $\overline{\Gglobe_c}=\{x\in\mathcal{D}^N | \hglobe_c(x) < 0\}$. Such global conditions may include, for example, a connectivity constraint.

For each $\mathcal{G}_{i}^{r}$ with $i \in \mathcal{I}$, $r \in \mathcal{R}$, let 
\begin{equation}
\label{pie}
\pi_{i}^{r} = \left\{
        \begin{array}{ll}
            1 & \quad x\in\mathcal{G}^r_i\\
            0 & \quad otherwise.
        \end{array}
    \right.
\end{equation}
This means $\pi_{i}^{r} = 1$ if and only if agent $i$ is in region $r$. 

Similarly, for all $c\in\mathcal{C}$, let $\piglobe_c=1$ if and only if $x\in \Gglobe_c$.

The collection 
\begin{equation}
\label{AP}
 \Pi=\{\pi_{i}^{r} | i \in \mathcal{I}, r \in \mathcal{R}\}\cup \{\piglobe_c | c\in\mathcal{C}\}   
\end{equation}
constitutes the set of atomic propositions for the multi-agent system. For $\pi\in\Pi$, we will sometimes write $\mathcal{G}_\pi$ to denote the set that induces $\pi$, \emph{i.e.}, $\mathcal{G}_\pi=\mathcal{G}^r_i$ if $\pi=\pi^r_i$ for some $i\in\mathcal{I}$, $r\in \mathcal{R}$, or $\mathcal{G}_\pi=\Gglobe_c$ if $\pi=\piglobe_c$ for some $c\in\mathcal{C}$. Finally, for $a\subset \Pi$, we denote
\begin{equation}
 \llbracket a \rrbracket =\bigcap_{\pi\in a}\mathcal{G}_\pi\cap \bigcap_{\pi \in\Pi\backslash a}\overline{\mathcal{G}_{\pi}},
\end{equation}
that is, $x\in\llbracket a \rrbracket$ if and only if, for all $\pi\in\Pi$, $x\in \mathcal{G}_\pi$ if and only if $\pi \in a$.

Given a trajectory $x(t)$ of the multi-agent system, intuitively, the {trace} of the trajectory is the sequence of sets of atomic propositions that are satisfied along the trajectory. The following definition formally defines the trace of a trajectory of a system \cite{c2}.
\begin{defn}
An infinite sequence $\sigma = a_{0}a_{1}\dots$ where $a_{i} \subseteq {\Pi}$ for all $i \in \mathbb{N}$ is the {\normalfont trace of a trajectory} $x(t)$ if there exists an associated sequence $t_{0}t_{1}t_{2}\dots$ of time instances such that $t_{0} = 0$, $t_{k} \rightarrow \infty$ as $k \rightarrow \infty$ and for each $m \in \mathbb{N}$, $t_{m} \in \mathbb{R}_{\geq 0}$ satisfies the following conditions,
\begin{itemize}
\item $t_{m} < t_{m+1}$
\item $x(t_{m}) \in \llbracket a_m \rrbracket$,
\item If $a_{m} \neq a_{m+1}$, then for some $t_{m}^{'} \in [t_{m}, t_{m+1}]$, $x(t) \in \llbracket a_m \rrbracket$ for all $t \in (t_{m}, t_{m}')$, $x(t) \in \llbracket a_{m+1} \rrbracket$ for all $t \in (t_{m}', t_{m+1})$, and either $x(t_{m}') \in \llbracket a_m \rrbracket$ or $x(t_{m}') \in \llbracket a_{m+1} \rrbracket$.
\item If $a_m=a_{m+1}$ for some $m$, then $a_m=a_{m+k}$ for all $k>0$ and $x(t)\in \mathcal{G}_\pi$ (resp., $x(t)\not \in \mathcal{G}_{\pi}$) if $\pi\in a_m$ (resp., $\pi\not\in a_m$) for all $t\geq t_m$ for all $\pi\in\Pi$. \QEDB
\end{itemize}
\end{defn}
The last condition of the above definition implies that a trace contains a repeated set of atomic propositions only if this set holds for all future time, capturing, \emph{e.g.}, a stability condition of the multi-agent system. By forbidding repetitions otherwise, we ensure that each trajectory possesses a unique trace. This exclusion is without loss of generality since we only considered LTL specifications without the next operator. To that end, we define the problem we aim to solve in this paper.

\newtheorem*{obj*}{\textbf{System Objective}}
\begin{obj*}
Given a multi-agent system with initial condition $x(0)\in\mathcal{D}^N$ and a LTL specification $\phi$ over the set of atomic propositions $\Pi$, synthesize a control law such that the resulting trace of the system satisfies the specification $\phi$.
\end{obj*}

\section{LASSO-TYPE CONSTRAINED REACHABILITY OBJECTIVES}
To solve the above objective, we propose to use finite time barrier certificates to solve for a satisfying controller online. To this end, we note that it is well-known that if there exists a trace (that is, a sequence of sets of atomic propositions) that satisfies a given LTL specification, then there exists a trace satisfying the specification in \emph{lasso} or \emph{prefix-suffix} form \cite{c3}, where a trace $\sigma$ in lasso form consists of a prefix $\sigma_{\text{pre}}$ and suffix $\sigma_{\text{suff}}$ that are both finite sequences of sets of atomic propositions such that the trace $\sigma$ is equal to the prefix sequence followed by the suffix sequence repeated infinitely often. Such a lasso trace is denoted as $\sigma= \sigma_{\text{pre}}(\sigma_{\text{suff}})^\omega$ where $\omega$ signifies infinite repetition.

Because atomic propositions of the multi-agent system are defined as subsets of the domain, it is possible to interpret such lasso traces as sequences of constrained reachability problems in lasso form, which leads to our control synthesis methodology described in Section VI. To that end, we have the following definitions.
\begin{defn}
\label{def:reach}
Given two sets $\Sigma\subseteq \mathcal{D}^{N}$ and $\Gamma\subseteq \mathcal{D}^N$, the \emph{constrained reachability problem} $R(\Sigma,\Gamma)$ consists in finding a feedback control strategy $u:\Sigma \to \mathbb{R}^{Nn}$ for the multi-agent system such that for any $x(0)\in \Sigma$, there exists a finite time $0 < T < \infty$ satisfying $x(t)\in \Sigma$ for all $t\in[0,T]$ and $x(T)\in \Gamma$, where $x(t)$ is the trajectory of the system initialized at $x(0)$ subject to the control strategy $u(x)$.
\end{defn} \QEDB

With Definition 3, we formalize the definition of a constrained reachability problem induced by sets of atomic propositions.
\begin{defn}
Let $a_{1}^{\top}\subseteq \Pi$, $a_{1}^{\bot}\subseteq \Pi$, $a_{2}^{\top} \subseteq \Pi$, and $a_{2}^{\bot} \subseteq \Pi$ be sets of atomic propositions for the multi agent system. Let
\begin{align}
\label{reachprob}
    \Gamma&= \bigg( \bigcap_{\pi\in a_{2}^{\top}\backslash a_{1}^{\top}} \mathcal{G}_\pi \bigg) \cap \bigg( \bigcap\limits_{\pi \in a_{2}^{\bot} \backslash a_{1}^{\bot}} \overline{\mathcal{G}_{\pi}}\bigg) \\
\label{avoidprob}
    \Sigma&= \bigg( \bigcap_{\pi \in a_{1}^{\top} \cap a_{2}^{\top}} \mathcal{G}_\pi \bigg) \cap \bigg( \bigcap_{\pi \in a_{1}^{\bot} \cap a_{2}^{\bot}}\overline{\mathcal{G}_\pi} \bigg).
\end{align}
The reachability problem $R(\Sigma,\Gamma)$ is the {\normalfont Constrained Reachability problem} induced by the sets $a_{1}^{\top}$, $a_{1}^{\bot}$, $a_{2}^{\top}$, and $a_{2}^{\bot}$.
\end{defn} \QEDB 

Here, $a_{1}^{\top}$ represents the set of atomic propositions which are true before the reachability objective is executed, $a_{1}^{\bot}$ represents the set of atomic propositions which are false before the reachability objective is executed, $a_{2}^{\top}$ is the set of atomic propositions that must be true after the execution of the reachability objective and $a_{2}^{\bot}$ is the set of atomic propositions that must be false at the end of the execution of the reachability objective.

In the above definitions, $\Gamma$ represents the reachability set and $\Sigma$ represents the safety set. If we can solve the constrained reachability problem defined in Definition 4 by synthesizing a control law $u : \Sigma \rightarrow \mathbb{R}^{Nn}$ such that the conditions in Definition 3 are satisfied, then the multi-agent system will converge to the set $\Gamma$ in a finite time while remaining in the safety set $\Sigma$. We solve a series of constrained reachability problems which results in a system trajectory whose trace satisfies the given LTL specification.

\color{black}
\begin{defn}
A \emph{lasso-type constrained reachability sequence} is a sequence of constrained reachability problems in lasso form such that each subsequent safety set is compatible with the prior goal set. That is, a lasso-type constrained reachability sequence has the form
  \begin{align}
    \label{reachlasso}
\mathcal{R}_{lasso} = \bigg(R_1R_2\ldots R_k\bigg) \bigg(R_{k+1},R_{k+2}\ldots R_{k+\ell}\bigg)^\omega \text{,}
  \end{align}
where $k\geq 0$, $\ell\geq 1$, and each $R_j=R(\Sigma_j, \Gamma_j)$ for some $\Gamma_j,\Sigma_j\subset D^N$ satisfying $\Gamma_j\subseteq \Sigma_{j+1}$ for all $j\in\{1,2,\ldots,k+\ell\}$ and also $\Gamma_{k+\ell}\subseteq \Sigma_{k+1}$. The sequence $(R_1R_2\ldots R_k)$ is a finite horizon prefix objective and $(R_{k+1},R_{k+2}\ldots R_{k+\ell})$ is a finite suffix objective that is repeated infinitely often.
\end{defn} \QEDB

By the preceding discussion, if there exists a trace that satisfies a given LTL specification, then there exists a lasso-type constrained reachability sequence that, if feasible, guarantees that the multi-agent system satisfies  the LTL specification. Algorithms exist for automatically extracting traces in lasso form from a so-called \emph{B\"{u}chi Automaton} constructed from a LTL specification \cite{c3}. Choosing a good lasso sequence candidate from a list of possible lasso sequences is outside the scope of this paper, but is the subject of our current research work.

\section{COMPOSITE FINITE TIME CONTROL BARRIER CERTIFICATES}
In this section, we formulate a theorem on composition of multiple finite time barrier certificates for a general control affine system of the form \eqref{ControlAffineSystem}. This result is applicable to any system with control affine dynamics as in \eqref{ControlAffineSystem}.
\newtheorem{theorem}{\textbf{Theorem}}
\renewcommand{\qedsymbol}{$\blacksquare$}
\begin{theorem}
Consider a dynamical system in control affine form as in \eqref{ControlAffineSystem}. Given $\Gamma \subset \mathbb{R}^{n}$ defined by a collection of $q\geq 1$ functions $\left\{ h_{i}(x) \right\}_{i = 1}^{q}$ such that $\Gamma = \bigcap\limits_{i = 1}^{q} \left\{ x \in \mathbb{R}^{n} | h_{i}(x) \geq 0 \right\}$ and for $ i = \left\{ 1, 2, 3, . . . , q' \right\}$ with $q' < q$, $h_{i}(x)$ is bounded i.e. $h_i(x) < M_i$ for all $x \in \mathcal{D}$, for $M_i > 0$.\footnote{If all the functions are bounded, then $q' = q$ and so we will have only \eqref{Constraint1} as a constraint in the QP $\forall i \in \left\{ 1,2,\dots, q \right\}$} If there exists a collection $\left\{ \alpha_{i} \right\}_{i = 1}^{q'}$ with $\alpha_{i} \in \mathbb{R}_{> 0}$, parameters $\gamma > 0$, $\rho \in [0,1)$ and a continuous controller $u(x)$ where $u : \mathcal{D} \rightarrow \mathbb{R}^{m}$, such that for all $x \in \mathcal{D}$
\begin{multline}
\label{Constraint1}
\sum\limits_{i = 1}^{q'} \bigg\{ \alpha_{i}(L_{f}h_{i}(x) + L_{g}h_{i}(x)u(x)) \bigg\} + \\
\gamma \cdot sign\bigg(min \bigg\{ h_{1}(x), h_{2}(x),\dots, h_{q'}(x) \bigg\}\bigg) \geq 0
\end{multline}
\begin{multline}
\label{Constraint2}
L_{f}h_{i}(x) + L_{g}h_{i}(x)u(x) + \gamma sign({h_{i}(x)}) |h_{i}(x)|^{\rho} \geq 0 \\
\text{ $\forall$ i $\in \left\{ q'+1, \dots , q \right\}$}
\end{multline}
then under the feedback controller $u(x)$, for all initial conditions $x_{0} \in \mathcal{D}$, there exists $0 < T < \infty$ such that $x(T) \in \Gamma$.
\end{theorem}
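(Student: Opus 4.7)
The plan is to split the set of barrier functions into the ``bounded'' group $i\in\{1,\ldots,q'\}$ and the ``unbounded'' group $i\in\{q'+1,\ldots,q\}$, handle each with different tools, and then combine the two conclusions at a common instant. Under the fixed continuous feedback $u(x)$ the closed-loop vector field $f(x)+g(x)u(x)$ is continuous on $\mathcal{D}$, so trajectories $x(t)$ starting from any $x_0\in\mathcal{D}$ are well defined; the arguments below will show that a finite reach time to $\Gamma$ exists.

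First I would dispose of the unbounded group. For each $i\in\{q'+1,\ldots,q\}$, the inequality \eqref{Constraint2} is exactly the defining pointwise inequality of a finite-time convergence control barrier function for $h_i$, so Proposition 1 applies and yields a finite time $T_i$ with $h_i(x(T_i))\geq 0$ together with forward invariance of $\{x:h_i(x)\geq 0\}$. Setting $T_2=\max_{i>q'}T_i$, one has $h_i(x(t))\geq 0$ for all $t\geq T_2$ and all $i>q'$.

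The main step is the composite argument for the bounded group. I would define the Lyapunov-like function $V(x)=\sum_{i=1}^{q'}\alpha_i h_i(x)$, which by the boundedness hypothesis is uniformly upper-bounded by $M:=\sum_{i=1}^{q'}\alpha_i M_i$. Along trajectories $\dot V=\sum_{i=1}^{q'}\alpha_i\bigl(L_f h_i+L_g h_i u\bigr)$, so \eqref{Constraint1} reads $\dot V(x(t))\geq -\gamma\,\mathrm{sign}\bigl(\min_{1\leq i\leq q'} h_i(x(t))\bigr)$. The key observation is that whenever $\min_{i\leq q'}h_i(x(t))<0$ one has $\dot V(x(t))\geq \gamma>0$. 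Arguing by contradiction, if $\min_{i\leq q'}h_i(x(t))<0$ were to persist for all $t\geq T_2$, integration would give $V(t)\geq V(T_2)+\gamma(t-T_2)\to\infty$, contradicting $V(t)<M$. Hence there exists $T\geq T_2$ with $\min_{i\leq q'}h_i(x(T))\geq 0$, and by continuity of each $h_i$ along $x(t)$ this $T$ can be taken as the first such instant.

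At time $T$ all the first $q'$ barrier values are nonnegative by the composite argument, and the remaining $h_i$ for $i>q'$ are nonnegative because $T\geq T_2$ and by forward invariance from the first step, so $x(T)\in\bigcap_{i=1}^{q}\{x:h_i(x)\geq 0\}=\Gamma$. The main obstacle I expect is that \eqref{Constraint1} does not on its own give forward invariance of $\bigcap_{i\leq q'}\{h_i\geq 0\}$ (once $V$ sits inside the set, $\dot V$ is only guaranteed $\geq -\gamma$), so a classical finite-time Lyapunov argument in the style of Proposition 1 is not directly available; the boundedness hypothesis $h_i<M_i$ is precisely what substitutes for this, capping $V$ and forcing $\min_{i\leq q'}h_i$ to become nonnegative in finite time.
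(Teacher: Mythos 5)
Your proposal is correct and follows essentially the same route as the paper's proof: dispatch the indices $i>q'$ via Proposition~1 applied to \eqref{Constraint2}, then show that persistence of $\min_{i\leq q'}h_i<0$ forces $\sum_{i=1}^{q'}\alpha_i h_i$ to grow at rate at least $\gamma$, contradicting the bound $\sum_{i=1}^{q'}\alpha_i M_i$. If anything, your write-up is slightly more careful than the paper's at the final step, where you explicitly invoke forward invariance of the sets $\{h_i\geq 0\}$ for $i>q'$ to conclude $x(T)\in\Gamma$ rather than only $x(T)\in\bigcap_{i=1}^{q'}\{h_i\geq 0\}$.
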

\begin{proof}
By contradiction, suppose for some $x_0 \in \mathcal{D} \backslash \Gamma$ the control law $u(x)$ that satisfies \eqref{Constraint1} and \eqref{Constraint2} is such that there does not exist a finite time $0 < T < \infty$ so that $x(T) \in \Gamma$. In particular, then for all $t > 0$,  $min \bigg\{ h_{1}(x(t)), h_{2}(x(t)),\dots, h_{q}(x(t)) \bigg\} < 0$, where $x(t)$ is the solution to \eqref{ControlAffineSystem} initialized at $x(0)$ under the control law $u(x)$. By \eqref{Constraint2} for all $t > T_i  = \frac{|h_i(x_0)|^{1 - \rho}}{\gamma (1 - \rho)}$, we have $h_{i}(x(t)) \geq 0$ for all $i = \{ q'+1,\dots,q\}$ by Proposition 1. To that end, if we define $T' = \max\limits_{i = q'+1,\dots,q} \big\{ T_i \big\}$, then for all $t > T'$ we have, $min \bigg\{ h_{1}(x(t)), h_{2}(x(t)),\dots, h_{q'}(x(t)) \bigg\} < 0$. In particular, observe that
\begin{align}
\label{const}
\frac{d}{dt} \sum\limits_{i = 1}^{q'}\bigg\{ \alpha_{i} h_{i}(x(t)) \bigg\} = \sum\limits_{i = 1}^{q'} \bigg\{ \alpha_{i}(L_{f}h_{i}(x) + L_{g}h_{i}(x)u(x)) \bigg\}
\end{align}
so that by integration of \eqref{const} using the fundamental theorem of calculus and \eqref{Constraint1}, we have
\begin{align*}
\sum\limits_{i = 1}^{q'}\bigg\{ \alpha_{i} h_{i}(x(t)) \bigg\} \geq \gamma (t-T') + \sum\limits_{i = 1}^{q'}\bigg\{ \alpha_{i} h_{i}(x(T')) \bigg\}
\end{align*}
We observe that as $t \rightarrow \infty$, $\sum\limits_{i = 1}^{q'} \bigg\{ \alpha_i h_{i}(x(t))\bigg\} \rightarrow \infty$. But this is a contradiction since $h_{i}(x(t))$ for $i = \{ 1,2\dots,q'\}$ is bounded i.e. $\sum\limits_{i = 1}^{q'} \bigg\{ \alpha_i h_{i}(x(t))\bigg\} < \sum\limits_{i = 1}^{q'} \alpha_i M_i$. This proves that there exists a $0 < T < \infty$ such that $x(T) \in \bigcap\limits_{i = 1}^{q'} \left\{ x \in \mathbb{R}^{n} | h_{i}(x) \geq 0 \right\}$.
\end{proof}
\newtheorem{cor}{\textbf{Corollary}}

\color{black}
We remark that \cite{c15} proposes a more restrictive solution to the constrained reachability problem with desired level sets being individually defined by multiple functions in a QP. In particular,  \cite{c15} allows for the set of control laws $\underline{\mathcal{U}}(x)$ given by
\begin{multline}
\underline{\mathcal{U}}(x) = \bigg\{ u \in \mathbb{R}^{m} \bigg| L_{f}h_{i}(x) + L_{g}h_{i}(x)u(x) + \\
\gamma \cdot sign({h_{i}(x)}) \cdot |h_{i}(x)|^{\rho} \geq 0 \\
\text{ $\forall$ i $\in \left\{ 1, \dots , q \right\}$} \bigg\} \text{,}
\end{multline}
Note that this is equivalent to taking $q' = 0$ in Theorem 1. To that end define,
\begin{multline}
\mathcal{U}(x) = \bigg\{ u \in \mathbb{R}^{m} \bigg| \text{\eqref{Constraint1} and \eqref{Constraint2} are satisfied} \bigg\}
\end{multline}
then we can formulate the following corollary

\begin{cor}
The set $\mathcal{U}(x)$ is a superset to the set $\underline{\mathcal{U}}(x)$ i.e. $\mathcal{U}(x) \supset \underline{\mathcal{U}}(x)$. \QEDB
\end{cor}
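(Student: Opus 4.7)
The plan is to establish set containment in the standard direct way: pick an arbitrary $u \in \underline{\mathcal{U}}(x)$ and verify that $u$ satisfies the two inequalities \eqref{Constraint1} and \eqref{Constraint2} that define $\mathcal{U}(x)$.

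Handling \eqref{Constraint2} should be immediate: by definition of $\underline{\mathcal{U}}(x)$, the inequality $L_f h_i(x) + L_g h_i(x) u(x) + \gamma \cdot \mathrm{sign}(h_i(x)) \cdot |h_i(x)|^\rho \geq 0$ holds for \emph{every} $i \in \{1, \ldots, q\}$, and restricting to $i \in \{q'+1, \ldots, q\}$ is exactly \eqref{Constraint2}. So the only real work is \eqref{Constraint1}. There, I would multiply each individual constraint (for $i \in \{1, \ldots, q'\}$) by the positive scalar $\alpha_i$ and add them, obtaining
\begin{align*}
\sum_{i=1}^{q'} \alpha_i \bigl( L_f h_i(x) + L_g h_i(x) u(x) \bigr) \geq -\gamma \sum_{i=1}^{q'} \alpha_i \, \mathrm{sign}(h_i(x)) \, |h_i(x)|^\rho.
\end{align*}

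To close the argument, I would need to show that this right-hand side is at least $-\gamma \cdot \mathrm{sign}\bigl(\min_i h_i(x)\bigr)$, i.e., that $\sum_{i=1}^{q'} \alpha_i \, \mathrm{sign}(h_i(x)) \, |h_i(x)|^\rho \leq \mathrm{sign}\bigl(\min_i h_i(x)\bigr)$. I would split into the natural sign cases: if $\min_i h_i(x) \geq 0$, then all $h_i(x) \geq 0$ and the sign on the right is nonnegative while the weighted sum on the left still needs the appropriate scaling; if $\min_i h_i(x) < 0$, the right-hand side equals $-1$ and the individual lower bounds on $L_f h_i + L_g h_i u$ need to aggregate to at least $\gamma$.

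I expect the main obstacle to be exactly this sign/magnitude reconciliation, since the per-component lower bounds scale with $|h_i(x)|^\rho$ while the composite lower bound in \eqref{Constraint1} is a constant $\gamma \cdot \mathrm{sign}(\min_i h_i)$. A clean argument will likely require invoking the $\alpha_i$ as tunable weights and using the fact that $\mathrm{sign}(\min_i h_i)$ is dominated in magnitude by any single $\mathrm{sign}(h_i)|h_i|^\rho$ contribution in the relevant regimes, together with the freedom to absorb the prefactor into the choice of $\alpha_i$. Once this pointwise sign inequality is in hand, the containment $\mathcal{U}(x) \supseteq \underline{\mathcal{U}}(x)$ follows, and strictness (the $\supset$ in the statement) can be established by exhibiting a single $u$ satisfying the aggregated constraint \eqref{Constraint1} while violating at least one of the individual constraints in $\underline{\mathcal{U}}(x)$ — this is essentially the motivating observation of the paper, since the weighted sum allows negative Lie derivative contributions from one $h_i$ to be compensated by positive ones from another.
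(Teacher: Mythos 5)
The paper itself supplies no argument for this corollary --- it is asserted with a tombstone immediately after being stated --- so the only thing to evaluate is whether your plan can be completed, and it cannot: the obstacle you flag at the end is not a technicality to be absorbed into the $\alpha_i$, it is fatal. Your reduction is correct up to the point where you need
$\sum_{i=1}^{q'} \alpha_i\, \mathrm{sign}(h_i(x))\, |h_i(x)|^{\rho} \leq \mathrm{sign}\bigl(\min_i h_i(x)\bigr)$,
but this inequality fails in both regimes. If all $h_i(x)$ are positive and large, the left side is $\sum_i \alpha_i |h_i(x)|^{\rho}$, which exceeds $1$ no matter how the fixed weights $\alpha_i>0$ are chosen; if some $h_i(x)$ is negative but close to zero, the helpful term $\alpha_i |h_i(x)|^{\rho}$ on the left is nearly $0$ while the right side is $-1$, so again no fixed $\alpha_i$ rescues it (the weights must be constants valid for all $x\in\mathcal{D}$, per the hypotheses of Theorem 1). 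Your suggested fix --- that $\mathrm{sign}(\min_i h_i)$ is ``dominated in magnitude by any single $\mathrm{sign}(h_i)|h_i|^{\rho}$ contribution'' --- is exactly backwards near the zero level set, where $|h_i|^{\rho}\to 0$ while $|\mathrm{sign}(\min_i h_i)|=1$.

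Moreover, the failure is not merely a failure of this proof route: the pointwise containment $\underline{\mathcal{U}}(x)\subseteq\mathcal{U}(x)$ is itself false as stated. Take $q'=1$, $\alpha_1=1$, $\rho=1/2$, and a point $x$ with $h_1(x)=4$, so $|h_1(x)|^{\rho}=2$. Any $u$ with $L_f h_1(x)+L_g h_1(x)u=-\tfrac{3}{2}\gamma$ satisfies the individual constraint defining $\underline{\mathcal{U}}(x)$ (since $-\tfrac{3}{2}\gamma+2\gamma\geq 0$) but violates \eqref{Constraint1} (since $-\tfrac{3}{2}\gamma+\gamma<0$); the symmetric failure occurs at $h_1(x)=-\epsilon$ for small $\epsilon$. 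So the composite constraint is genuinely stronger than the individual ones at some points and weaker at others, and neither inclusion holds. What you can salvage is the strictness half you sketch in your last sentence --- exhibiting $u\in\mathcal{U}(x)\setminus\underline{\mathcal{U}}(x)$ by trading a negative Lie-derivative contribution against a positive one is easy --- but that only shows $\mathcal{U}(x)\not\subseteq\underline{\mathcal{U}}(x)$, not the claimed superset relation. You should not expect to find a correct completion of your argument; the right response here is to identify the counterexample and note that the claim would need to be weakened (e.g., to a statement about nonemptiness or feasibility of the associated QPs) or furnished with additional hypotheses.
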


From \cite{c15}, finite time barrier certificates also possess the property of forward invariance. This allows us to encode the global constraints as well as other additional system constraints, discussed in Section III, as invariance conditions in a QP. To that end, we formulate the following remarks.

\newtheorem{rmk}{\textbf{Remark}}
\begin{rmk}
In addition, suppose we require that $x(t) \in \Sigma$ as in \eqref{avoidprob} for all $t > 0$, then we add these additional constraints individually in the QP as constraints of the form \eqref{Constraint2} of Theorem 1. In order to solve the constrained reachability problem of the form $R(\Sigma, \Gamma)$ as in Definition 3 where $\Gamma$ and $\Sigma$ are of the form \eqref{reachprob} and \eqref{avoidprob} respectively, we can use Theorem 1 for the reachability problem along with constraints of the form \eqref{Constraint2}  which ensure invariance as discussed in Proposition 1, for the safety and avoid problem.
\end{rmk}

\begin{rmk}{\footnote{Online quadratic programs are solved easily using non strict inequalities, so we introduce parameter $\epsilon>0$ where we take $\epsilon$ to be small.}}
Note that if $\overline{\mathcal{G}_{\pi_{i}^{r}}}$ is a part of the definition of $\Sigma$ as in \eqref{avoidprob}, then we use $\overline{\mathcal{G}_{\pi_{i}^{r}}} = \overline{\mathcal{G}_{i}^{r}} = \left\{ x \in \mathcal{D}^{N} | h_{r}(x_i) < 0 \right\}$ for all $r \in \mathcal{R}$, $i \in \mathcal{I}$. For each $\pi_i^r$ with $i\in\mathcal{I}$ and $r\in\mathcal{R}$, let $\bar{h}_{r,\epsilon}(x_i)=-h_r(x_i)-\epsilon$ so that we have $\{x \in \mathcal{D}^{N}|\bar{h}_{r,\epsilon}(x_i)\geq 0\}\subset \overline{\mathcal{G}_i^r}$. Likewise, for $c\in\mathcal{C}$, let $\bar{h}^\text{globe}_{c,\epsilon}(x)=-\hglobe_c(x)-\epsilon$ so that $\{x \in \mathcal{D}^{N}|\bar{h}^\text{globe}_{c,\epsilon}(x)\geq 0\}\subseteq \overline{\Gglobe_c}$. The constraints can then be encoded in the QP as $L_{f}\bar{h}_{r,\epsilon}(x_i) + L_{g}\bar{h}_{r,\epsilon}(x_i)u(x) + \gamma \cdot sign(\bar{h}_{r,\epsilon}(x_i)) \cdot |\bar{h}_{r,\epsilon}(x_i)|^{\rho} \geq 0$ and $L_{f}\bar{h}^\text{globe}_{c,\epsilon}(x) + L_{g}\bar{h}^\text{globe}_{c,\epsilon}(x)u(x) + \gamma \cdot sign(\bar{h}^\text{globe}_{c,\epsilon}(x)) \cdot |\bar{h}^\text{globe}_{c,\epsilon}(x)|^{\rho} \geq 0$, for all $i \in \mathcal{I}$, $r \in \mathcal{R}$ and $c \in \mathcal{C}$ as described in \eqref{Constraint2}.
\end{rmk}

\begin{figure}[thpb]
    \includegraphics[width=8.7cm,height=7cm,scale = 1]{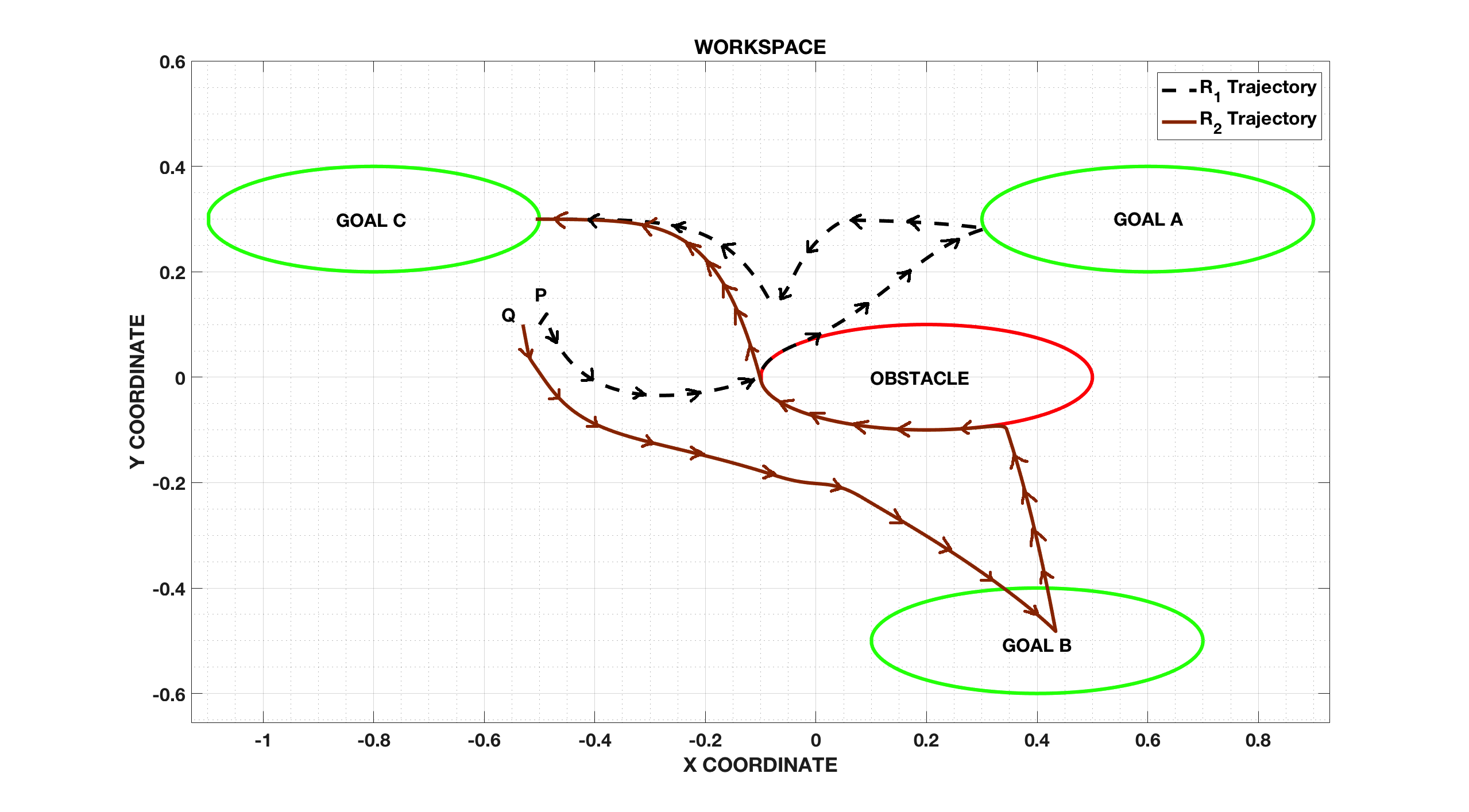}
      \caption{A simulated trajectory for $\mathtt{R}_{1}$ and $\mathtt{R}_{2}$ which satisfies the task specification, ``$\mathtt{R}_{1}$ should go to goal A and $\mathtt{R}_{2}$ should go to goal B, and then both robots should go to goal C, all the while ensuring that the safety constraint is satisfied and the obstacle is avoided".}
      \label{figurelabel}
\end{figure}
\section{ILLUSTRATIVE EXAMPLE}
Consider a two robot (single integrator dynamics) homogeneous multi-agent system in the domain $\mathcal{D}^{2} \subset \mathbb{R}^{4}$ with three regions $\mathcal{R} = \left\{ A,B,C, O\right\}$ and agents indexed by the set $\mathcal{I} = \left\{ 1,2\right\}$. This is as shown in Fig. 1. The state of the system is $x \in \mathbb{R}^{4}$. We first require robot $\mathtt{R}_{1}$ to visit goal A and robot $\mathtt{R}_{2}$ to visit goal B. Then, both the robots must visit goal C. This process needs to be repeated infinitely often. However, there is a caveat to this task specification. We enforce a connectivity constraint between the two robots which is a function of the state of $\mathtt{R}_{2}$. In addition to this, the robots must always avoid the obstacle region `O'. It is important to note that we do not require the agents to enter the goal regions simultaneously. However, after a certain time both the agents must be inside their respective goal regions. The function for the level set of each region of interest is
\begin{align}
h_{r}(x_i) = 1 - (x_i - C_{r})^{T}P_{r}(x_i - C_{r}) \text{ , $ \forall r \in \mathcal{R}$, $\forall i \in \mathcal{I}$.}
\end{align}
Here, $P_{r}$ is a positive definite matrix, $C_{r}$ is the center of the region of interest and $x_i$ is the state of agent $i$ of the system.
\begin{figure}[thpb]
      \centering
    \includegraphics[width=8.5cm,height=6.5cm,scale = 1]{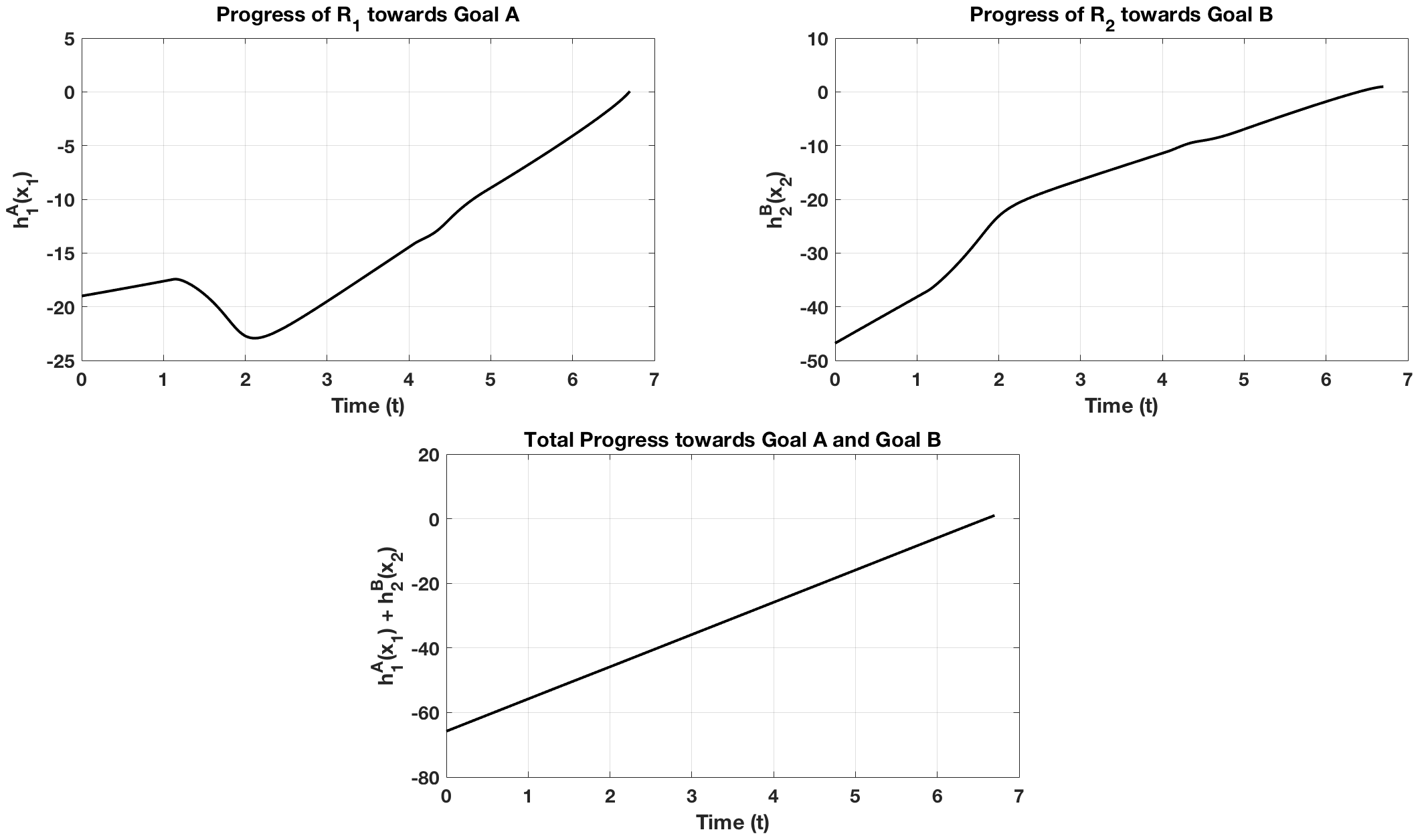}
      \caption{Level sets for goal A and goal B, along with the net progress and rate of progress towards both goals. Even though $\mathtt{R}_{1}$ moves away from goal A for a brief moment, the net total progress towards the goals is increasing at all times.}
      \label{figurelabel}
      \end{figure}
The three goal regions A, B and C are defined as \eqref{Goal}, and the obstacle `O' is defined as \eqref{Obstacle}.
\color{black}
The additional connectivity constraint is given as
\begin{equation}
h_{globe}(x) = d_{globe}^{2}(x) - ||x_{2} - x_{1}||^{2},
\end{equation}
where $d_{globe} : \mathcal{D}^{2} \rightarrow \mathbb{R}$ is the connectivity distance between the two agents that needs to be maintained, and $|| x_{2} - x_{1}||$ is the inter-agent distance. We consider
\begin{equation}
d_{globe}^{2}(x)  = (x_{2,1} + \delta_{1})^{2} + \delta_{2},
\end{equation}
where $\delta_{1}$ and $\delta_{2}$ are constants, and $x_{2,1}$ is the $x$ coordinate of $\mathtt{R}_{2}$ in the workspace. Such a constraint captures a situation in which the robots have poor connectivity in certain areas of the workspace, which requires them to maintain a closer distance with each other. In areas where the robots have strong connectivity, they are free to maintain a larger distance from each other. This constraint is equivalent to the global constraints $h_{c}^{globe}(x)$ discussed in section III.

The global LTL specification to be satisfied by the multi-agent system is
\begin{align}
\phi = \Box (\Diamond ((\pi_{1}^{A} \wedge \pi_{2}^{B}) \wedge \Diamond \big( \pi_{1}^{C} \wedge \pi_{2}^{C} \big) ) ) \wedge \Box (\pi^{globe} \wedge \neg \pi_{1}^{O} \wedge \neg \pi_{2}^{O} ) \text{.}
\end{align} 
\color{black}
The lasso-type constrained reachability sequence for this example from Definition 5 is $\mathcal{R}_{lasso} = (R_1 R_2)^{\omega}$.

From \eqref{AP}, $\Pi = \{ \pi_{1}^{A}, \pi_{2}^{B}, \pi_{1}^{C}, \pi_{2}^{C}, \pi^{globe}, \pi_{1}^{O}, \pi_{2}^{O} \}$. We assume that the robots start off within the connectivity radius $d_{globe}$. Applying the formalism from Definition 3 and Definition 4, we have $a_{1}^{\top} = \{ \pi^{globe}\}$, $a_{1}^{\bot} = \{ \pi_{1}^{A}, \pi_{2}^{B}, \pi_{1}^{C}, \pi_{2}^{C}, \pi_{1}^{O}, \pi_{2}^{O}\}$, $a_{2}^{\top} = \{ \pi_{1}^{A}, \pi_{2}^{B}, \pi^{globe}\}$, and $a_{2}^{\bot} = \{ \pi_{1}^{O}, \pi_{2}^{O}\}$. Hence, the reachability set from \eqref{reachprob} is given by $\Gamma_1 = \bigg( \mathcal{G}_{\pi_{1}^{A}} \cap \mathcal{G}_{\pi_{2}^{B}} \bigg)$.  The safety set from \eqref{avoidprob} is given by, $\Sigma_1 = \mathcal{G}_{\pi^{globe}} \cap \bigg( \overline{\mathcal{G}_{\pi_{1}^{O}}} \cap \overline{\mathcal{G}_{\pi_{2}^{O}}} \bigg)$. Thus, the constrained reachability problem to be solved is $R_1(\Sigma_1, \Gamma_1)$.

The constraints encoded in the QP are,
\begin{align*}
& \frac {\partial (h_{A}(x_{1}) + h_{B}(x_{2}))}{\partial x}u \geq - \gamma \cdot sign(min \left\{ h_{A}(x_{1}), h_{B}(x_{2})\right\}) \\
& \frac {\partial h_{O}(x_i)}{\partial x}u \geq - \gamma \cdot sign(h_{O}(x_i)) \cdot |h_{O}(x_i)|^{\rho} \text{, for all $i \in \mathcal{I}$}\\
& \frac {\partial h_{globe}(x)}{\partial x}u \geq - \gamma \cdot sign(h_{globe}(x)) \cdot |h_{globe}(x)|^{\rho}
\end{align*}

For the second reachability problem, from Definition 3 and Definition 4, we have $a_{1}^{\top} = \{\pi_{1}^{A}, \pi_{2}^{B}, \pi^{globe}\}$, $a_{1}^{\bot} = \{ \pi_{1}^{C}, \pi_{2}^{C}, \pi_{1}^{O}, \pi_{2}^{O}\}$, $a_{2}^{\top} = \{ \pi_{1}^{C}, \pi_{2}^{C}, \pi^{globe}\}$, and $a_{2}^{\bot} = \{ \pi_{1}^{O}, \pi_{2}^{O}\}$. The corresponding reachability problem and safety problem are $\Gamma_2 = \bigg( \mathcal{G}_{\pi_{1}^{C}} \cap \mathcal{G}_{\pi_{2}^{C}} \bigg)$, and $\Sigma_2 = \mathcal{G}_{\pi^{globe}} \cap \bigg( \overline{\mathcal{G}_{\pi_{1}^{O}}} \cap \overline{\mathcal{G}_{\pi_{2}^{O}}} \bigg)$
Thus, the constrained reachability problem to be solved is $R_2(\Sigma_2, \Gamma_2)$.

Similarly, the constraints encoded in the QP are,
\begin{align*}
& \frac {\partial (h_{C}(x_{1}) + h_{C}(x_{2}))}{\partial x}u \geq - \gamma \cdot sign(min \left\{ h_{C}(x_{1}), h_{C}(x_{2})\right\}) \\
& \frac {\partial h_{O}(x_i)}{\partial x}u \geq - \gamma \cdot sign(h_{O}(x_i)) \cdot |h_{O}(x_i)|^{\rho} \text{, for all $i \in \mathcal{I}$}\\
& \frac {\partial h_{globe}(x)}{\partial x}u \geq - \gamma \cdot sign(h_{globe}(x)) \cdot |h_{globe}(x)|^{\rho}
\end{align*}

By solving these two QPs, we solve the lasso-type constrained reachability sequence, $\mathcal{R}_{lasso} = ( R_1 R_2 )^{\omega}$ as per \eqref{reachlasso}.

\color{black}
\section{SIMULATION AND EXPERIMENTAL RESULTS}
The generated trajectory in MATLAB is shown in Fig. 1. Fig. 2 illustrates how our theorem is effective by ensuring that the system makes total positive (increasing) progress towards its goals for all $t \geq 0$, even though an individual robot moves away from its goal momentarily ($\mathtt{R}_{1}$ moves away from goal A in the interval from P to Q in Fig. 1. This corresponds to the dip in the level set $h_{1}^{A}(x_{1})$ in the first plot in Fig. 2). It is important to note that the connectivity constraint is maintained for all time of the simulation and experiment. We also execute our algorithm on the Robotarium testbed \cite{c16} and provide a video of the experiment (https://youtu.be/Gnga3k2BHWg). The trajectories followed by the robots is consistent with the trajectory in Fig. 1.

\section{CONCLUDING REMARKS}
In this paper we provided a theoretical framework to synthesize controllers for continuous time multi-agent systems, given a linear temporal logic task specification using finite time control barrier certificates. We formulated a theorem on the composition of multiple bounded finite time barrier certificates. The proposed framework results in a larger set of feasible control laws as compared to methods such as \cite{c15}. By solving a sequence of constrained reachability problems by means of quadratic programs, we solve a lasso-type constrained reachability sequence that synthesizes system trajectories whose traces satisfy the given LTL specification.
\addtolength{\textheight}{-12cm}


\begin{thebibliography}{99}

\bibitem{c1} E. M. Wolff, U. Topcu and R. M. Murray, ``Automaton-guided controller synthesis for nonlinear systems with temporal logic," 2013 IEEE/RSJ International Conference on Intelligent Robots and Systems, Tokyo, 2013, pp. 4332-4339.
\bibitem{c2} TT. Wongpiromsarn, U. Topcu and A. Lamperski, ``Automata Theory Meets Barrier Certificates: Temporal Logic Verification of Nonlinear Systems," in IEEE Transactions on Automatic Control, vol. 61, no. 11, pp. 3344-3355, Nov. 2016.
\bibitem{c3} C. Baier and J.-P. Katoen, Principles of Model Checking, MIT Press, 2008.
\bibitem{c4} L. Wang, A. D. Ames and M. Egerstedt, ``Safety Barrier Certificates for Collisions-Free Multirobot Systems," in IEEE Transactions on Robotics, vol. 33, no. 3, pp. 661-674, June 2017.
\bibitem{c5} L. Wang, A. Ames and M. Egerstedt, ``Safety barrier certificates for heterogeneous multi-robot systems," 2016 American Control Conference (ACC), Boston, MA, 2016, pp. 5213-5218.
\bibitem{c6} A. D. Ames, J. W. Grizzle and P. Tabuada, ``Control barrier function based quadratic programs with application to adaptive cruise control," 53rd IEEE Conference on Decision and Control, Los Angeles, CA, 2014, pp. 6271-6278.
\bibitem{c7} A. D. Ames, X. Xu, J. W. Grizzle and P. Tabuada, ``Control Barrier Function Based Quadratic Programs for Safety Critical Systems," in IEEE Transactions on Automatic Control, vol. 62, no. 8, pp. 3861-3876, Aug. 2017.
\bibitem{c8} G. E. Fainekos, H. Kress-Gazit and G. J. Pappas, ``Temporal Logic Motion Planning for Mobile Robots," Proceedings of the 2005 IEEE International Conference on Robotics and Automation, 2005, pp. 2020-2025.
\bibitem{c9} S. Prajna, A. Jadbabaie and G. J. Pappas, ``A Framework for Worst-Case and Stochastic Safety Verification Using Barrier Certificates," in IEEE Transactions on Automatic Control, vol. 52, no. 8, pp. 1415-1428, Aug. 2007.
\bibitem{c10} S. Prajna, and A. Jadbabaie, ``Safety Verification of Hybrid Systems Using Barrier Certificates", International Workshop on Hybrid Systems: Computation and Control (HSCC 2004), pp 477-492
\bibitem{c11} A. Ulusoy, S. L. Smith, X. C. Ding and C. Belta, ``Robust multi-robot optimal path planning with temporal logic constraints," 2012 IEEE International Conference on Robotics and Automation, Saint Paul, MN, 2012, pp. 4693-4698.
\bibitem{c12} Y. Kantaros and M. M. Zavlanos, ``Sampling-Based Control Synthesis for Multi-robot Systems under Global Temporal Specifications," 2017 ACM/IEEE 8th International Conference on Cyber-Physical Systems (ICCPS), Pittsburgh, PA, 2017, pp. 3-14.
\bibitem{c13} M. Rauscher, M. Kimmel and S. Hirche, ``Constrained robot control using control barrier functions," 2016 IEEE/RSJ International Conference on Intelligent Robots and Systems (IROS), Daejeon, 2016, pp. 279-285.
\bibitem{c14} M. Guo, K. H. Johansson and D. V. Dimarogonas, ``Motion and action planning under LTL specifications using navigation functions and action description language," 2013 IEEE/RSJ International Conference on Intelligent Robots and Systems, Tokyo, 2013, pp. 240-245.
\bibitem{c15}A. Li, L. Wang, P. Pierpaoli, and M. Egerstedt, ``Formally Correct Composition of Coordinated Behaviors using Control Barrier Certificates", 2018 IEEE/RSJ International Conference on Intelligent Robots and Systems
\bibitem{c16} D. Pickem, P. Glotfelter, L. Wang, M. Mote, A. Ames, E. Feron, and M. Egerstedt, ``The Robotarium: A remotely accessible swarm robotics research testbed," 2017 IEEE International Conference on Robotics and Automation (ICRA), Singapore, 2017, pp. 1699-1706.
\bibitem{c17}M. Kloetzer, X. C. Ding and C. Belta, ``Multi-robot deployment from LTL specifications with reduced communication," 2011 50th IEEE Conference on Decision and Control and European Control Conference, Orlando, FL, 2011, pp. 4867-4872.
\bibitem{c18}E. A. Gol and C. Belta, ``Time-constrained temporal logic control of multi-affine systems", Nonlinear Analysis: Hybrid Systems (NAHS), 2013
\bibitem{c19} M. Tobenkin, I. Manchester, and R. Tedrake, ``Invariant funnels around trajectories using sum-of-squares programming", https://arxiv.org/abs/1010.3013
\color{black}
\bibitem{c20} S. P. Bhat and D. S. Bernstein, ``Finite-time stability of continuous autonomous systems", SIAM Journal on Control and Optimization, 38(3):751?766, 2000


\end{thebibliography}
\end{document}